
\documentclass{article}
\usepackage{amsfonts}
\usepackage{bbm}
\usepackage{amsmath, amsthm, amssymb}
\usepackage{color,calc}
\usepackage{hyperref}

\newcommand{\sech}{\operatorname{sech}}

\newcommand{\D}{\Delta}
\newcommand{\G}{\Gamma}

\newcommand{\Wr}{\mathrm{Wrd}}

\newcommand{\af}{\alpha}
\newcommand{\dt}{\delta}
\newcommand{\ta}{\tau}
\newcommand{\kp}{\kappa}
\newcommand{\et}{\eta}

\newcommand{\tht}{\theta}

\newcommand{\Om}{\Omega}
\newcommand{\ld}{\lambda}

\newtheorem{prop}{Proposition}
\newtheorem{lemm}{Lemma}

\newtheorem{exam}{Example}

\begin{document}

\begin{center}{\Large \bf A new extended discrete KP hierarchy and generalized dressing method}
\end{center}
\begin{center}
{\it Yuqin Yao$^{1)}$\footnote{yqyao@math.tsinghua.edu.cn }, Xiaojun
Liu$^{2)}$\footnote{tigertooth4@gmail.com} and Yunbo
Zeng$^{1)}$\footnote{Corresponding author:

~~~~yzeng@math.tsinghua.edu.cn} }
\end{center}
\begin{center}{\small \it $^{1)}$Department of Mathematical Science,
Tsinghua University, Beijing, 100084 , PR China\\
$^{2)}$Department of
  Applied Mathematics, China Agricultural University, Beijing, 100083, PR China}
\end{center}

\vskip 12pt { \small\noindent\bf Abstract}
 {Inspired by
the squared eigenfunction symmetry constraint, we introduce a new
$\ta_k$-flow by ``extending'' a specific $t_n$-flow of discrete KP
hierarchy (DKPH). We construct extended discrete KPH (exDKPH), which
consists of
 $t_n$-flow, $\ta_k$-flow  and $t_n$ evolution of eigenfunction and
 adjoint eigenfunctions, and its Lax representation.
The exDKPH contains
 two types of discrete KP equation with self-consistent sources (DKPESCS).
  Two reductions of exDKPH are obtained. The generalized dressing approach
  for solving the exDKPH is proposed and the N-soliton solutions of two types of the DKPESCS are
  presented.}

\section{Introduction}
Generalizations of soliton hierarchy attract a lot of interests from
both physical and mathematical points and there were some methods to
generalize the soliton hierarchy\cite{g1}-\cite{g7}.  Recently, a
systematic approach inspired by squared eigenfunction symmetry
constraint was proposed to construct the extended KP
hierarchy\cite{g10}. By this method, the extended two-dimensional
Toda lattice hierarchy, the extended CKP hierarchy and the extended
q-deformed KP hierarchy have been obtained\cite{g11}-\cite{g13}.

 The discrete KP hierarchy(DKPH) \cite{DE}-\cite{tau} is an
interesting object in the research of the discrete integrable
systems and the discretization of the integrable
systems\cite{discre1}. The Sato's approach for the discrete KPH was
presented in \cite{VT}. Naturally, there are some similar properties
between discrete KPH and KPH\cite{kp}, such as tau
function\cite{tau,kp}, Hamiltonian structure\cite{tau} and gauge
transformation\cite{dkp1,gt1,dkp5}, etc. In \cite{dkp1}, Oevel has
given explicitly two types of gauge transformation operators of the
discrete KPH. In \cite{dkp5}, the combined gauge operator and the
determinant representation of the operator have been obtained.

In this paper, we will construct the extension of the discrete
KPH(exDKPH). Inspired by the squared eigenfunction symmetry
constraint of discrete KP hierarchy \cite{dkp1}, we introduced the
new $\ta_k$-flow by ``extending'' a specific $t_n$-flow of discrete
KP hierarchy. Then we find the exDKPH consisting of $t_n$-flow of
discrete KP hierarchy, $\ta_k$-flow and the $t_n$-evolutions of
eigenfunctions and adjoint eigenfunctions. The commutativity of
$t_n$-flow and $\ta_k$-flow gives rise to zero curvature
representation for exDKPH. Also the Lax representation of exDKPH is
derived. Due to the introduction of $\ta_k$-flow the exDKPH contains
two time series $\{t_n\}$ and $\{\ta_k\}$ and more components by
adding eigenfunctions and adjoint eigenfunctions. The exDKPH
contains the first type and second type of discrete KP equation with
self-consistent sources(DKPESCS). The KP equation with
self-consistent sources arose in some physical models describing the
interaction of long and short waves\cite{g7}. The similarity of KP
equation and discrete KP equation enables us to speculate on the
potential application of discrete KP equation with self-consistent
sources. By $t_n$-reduction and $\ta_k$-reduction, the exDKPH
reduces to a discrete 1+1-dimensional integrable hierarchy with
self-consistent sources and constrained discrete KP hierarchy,
respectively.

The dressing method is an important tool for solving soliton
hierarchy \cite{tau}. However this method can not be applied
directly for solving the ``extended'' hierarchy. A generalized
dressing approach for exKPH is proposed in \cite{dress}. In this
paper, with the combination of dressing method and variation of
constants method, a generalization to the dressing method for exDKPH
is presented, which is based on the dressing method for discrete KPH
\cite{VT} and the similar approach for finding Wronskian solutions
to constrained KP hierarchy \cite{OS96}. In this way, we can solve
the entire hierarchy of exDKPH in an unified and simple manner. As
the special cases, the N-soliton solutions of the \emph{both} types
of DKPESCS
 are obtained
simultaneously.

 This paper will be organized as follows. In Sec.2, we present the
exDKPH and its Lax pair, which includes two types of  DKPESCS. In
Sec.3, $t_n$-reduction and $\tau_k$- reduction for the exDKPH are
given. In Sec.4, we discuss the generalized dressing method for the
exDKPH. In Sec.5, we present the N-soliton solutions of the DKPESCS.

\section{New extended discrete KP hierarchy}
We denote  the shift and the difference operators acting on the
associative ring $F$ of functions by $\Gamma$ and $\Delta$,
respectively, as follows
$$F=\{f(l)=f(l,t_1,t_2,\cdots,t_i,\cdots); l\in \mathbb{Z},~t_i\in\mathbb{R} \}$$
$$\Gamma (f(l))=f(l+1)= f^{(1)}(l),~\Delta (f(l))=f(l+1)-f(l).$$
In this paper, we use $P(f)$ to denote an action of difference
operator $P$ on the function $f$, while $Pf$ means the
multiplication of difference operator $P$ and zero order difference
operator $f$. Define the following operation
\begin{equation}
 \label{eqn:def1}
\small{\Delta^{j} f=\sum\limits_{i=0}^{\infty}\left(\begin{array}{c}
    j\\
    i\\
  \end{array}\right)(\Delta^{i}(f(l+j-i)))\Delta^{j-i},~\left(\begin{array}{c}
    j\\
    i\\
  \end{array}\right)=\frac{j(j-1)\cdots(j-i+1)}{i!}.}\end{equation}
Also, we define the adjoint operator to the $\Delta$ operator by
$\Delta^{*}$
\begin{equation}
 \label{eqn:2}
 \Delta^{*}(f(l))=(\Gamma^{-1}-I)(f(l))=f(l-1)-f(l),\end{equation}
\begin{equation}
 \label{eqn:def2}
\Delta^{*j}f=\sum\limits_{i=0}^{\infty}\left(\begin{array}{c}
    j\\
    i\\
  \end{array}\right)(\Delta^{*i}(f(l+i-j)))\Delta^{*j-i}.\end{equation}
  Let $P=\sum_{j=-\infty}^{k}f_{j}(l)\Delta^{j}$, the adjoint
  operator $P^{*}$ is defined by $P^{*}=\sum_{j=-\infty}^{k}\Delta^{*j}f_{j}(l).$

The Lax equation of the DKP hierarchy is given by\cite{DE,VT}
\begin{equation}
  \label{eqn:dKP-LaxEqn}
  L_{t_n}=[B_n,L],
\end{equation}
where  $L = \D + f_0 + f_1\D^{-1}+f_2\D^{-2}+\cdots $ is a
pseudo-difference operator with potential functions $f_i\in F$,
$B_n=L^n_{+}$ stands for the difference part of $L^{n}$. The
commutativity of $t_n$- and $t_m$-flow gives rise to the
zero-curvature equations for DKP hierarchy:
\begin{equation}
 \label{eqn:zeroc}
  B_{n,t_m}-B_{m,t_n}+[B_n,B_m]=0.
\end{equation}
with the  Lax pair given by
\begin{equation}
  \label{eqn:dKP-LaxPair}
  \psi_{t_n}=B_n(\psi),\quad\psi_{t_m}=B_m(\psi).
\end{equation}
The $t_n$ evolutions of eigenfunction $\psi$ and adjoint
eigenfunction $\phi$ read
\begin{equation}
\label{eqn:dKP-aLaxPair}
  \psi_{t_n}=B_n(\psi),\quad \phi_{t_n}=-B_n^*(\phi).
\end{equation}
For $n=2,~m=1,$ (\ref{eqn:zeroc}) gives rise to the DKP
equation\cite{DE}
\begin{equation}
\label{eqn:dKPe}
 \Delta(f_{0t_{2}}+2f_{0t_{1}}-2f_{0}f_{0t_{1}})=(\Delta+2)f_{0t_{1}t_{1}}.
\end{equation}

 It is known that the
squared eigenfunction symmetry constraint given by\cite{dkp1}
$$
   \tilde{B}_{k}=B_k+\sum_{i=1}^N\psi_i\D^{-1}\phi_i~~~~$$$$
    \psi_{i,t_n}=B_n(\psi_i),~
    \phi_{i,t_n}=-B_n^*(\phi_i),~i=1,\cdots,N,
 $$
is compatible with  DKP hierarchy. Here $N$ is an arbitrary natural
number, $\psi_i$ and $\phi_i$ are $N$ different eigenfunctions and
adjoint eigenfunctions of the equations (9c). This compatibility
enables us to construct a new extended discrete KP hierarchy
(exDKPH) as
\begin{subequations}
  \label{eqns:exdKP-LaxEqn}
  \begin{align}
    L_{t_n}&=[B_n,L],  \label{eqn:exdKP-LaxEqna}\\
    L_{\ta_k}&=[B_k+\sum_{i=1}^N\psi_i\D^{-1}\phi_i,L], \label{eqn:exdKP-LaxEqnb}\\
    \psi_{i,t_n}&=B_n(\psi_i),~
    \phi_{i,t_n}=-B_n^*(\phi_i),~i=1,\cdots,N. \label{eqn:exdKP-LaxEqnc}
  \end{align}
\end{subequations}
We have the following lemma.
\begin{lemm}
  \label{lm:1}
  Let $Q=a\Delta^{k},~k\geq1,$ then
\begin{subequations}
  \label{eqns:lemma1}
  \begin{align}
    (\Delta^{-1}\phi Q)_{-}& = \Delta^{-1}Q^{*}(\phi)\label{eqn:lemma1a}\\
    [B_n,\psi\Delta^{-1}\phi]_{-}&=B_n(\psi)\Delta^{-1}\phi-\psi\Delta^{-1}B^{*}_n(\phi). \label{eqn:lemma1b}
      \end{align}
\end{subequations}
\end{lemm}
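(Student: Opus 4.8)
\medskip
\noindent\textbf{Proof proposal.} The plan is to prove (a) first by a one-line recursion and then deduce (b) from it by linearity. The single ingredient behind (a) is the operator ``summation by parts'' identity obtained by rearranging the Leibniz rule $\Delta f=f^{(1)}\Delta+\Delta(f)$: for any function $g\in F$,
\[
g\Delta=\Delta\,g^{(-1)}+\Delta^{*}(g),
\]
where $g^{(-1)}$ is multiplication by $g(l-1)$ and $\Delta^{*}(g)=g^{(-1)}-g$ is a function; this is checked by applying both sides to an arbitrary $h$. Writing $\phi Q=(\phi a)\Delta^{k}$ and setting $g=\phi a$, the target (a) becomes $(\Delta^{-1}g\Delta^{k})_{-}=\Delta^{-1}\Delta^{*k}(g)$, since $Q^{*}=\Delta^{*k}a$ gives $Q^{*}(\phi)=\Delta^{*k}(\phi a)$.

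For the recursion I would use $\Delta^{-1}\Delta=I$ together with the identity above to write, for any $m\geq 1$,
\[
\Delta^{-1}g\Delta^{m}=\Delta^{-1}(g\Delta)\Delta^{m-1}=g^{(-1)}\Delta^{m-1}+\Delta^{-1}\Delta^{*}(g)\Delta^{m-1}.
\]
Because $g^{(-1)}\Delta^{m-1}$ is a difference operator, it carries no strictly-negative powers and drops out under $(\,\cdot\,)_{-}$, leaving the one-step reduction $(\Delta^{-1}g\Delta^{m})_{-}=(\Delta^{-1}\Delta^{*}(g)\Delta^{m-1})_{-}$. Iterating this $k$ times lowers the power of $\Delta$ to zero while replacing $g$ by $\Delta^{*k}(g)$; since $\Delta^{-1}$ applied to a function is already purely pseudo-difference, $(\Delta^{-1}\Delta^{*k}(g))_{-}=\Delta^{-1}\Delta^{*k}(g)$, which is exactly (a). (The hypothesis $k\geq 1$ just guarantees at least one reduction step.)

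For (b) I would first extend (a) by linearity from a monomial $a\Delta^{k}$ to a general difference operator, so that $(\Delta^{-1}\phi B_{n})_{-}=\Delta^{-1}B_{n}^{*}(\phi)$ (the order-zero summand gives the trivial identity $(\Delta^{-1}\phi\,b_{0})_{-}=\Delta^{-1}b_{0}\phi$). Then I split $[B_{n},\psi\Delta^{-1}\phi]=B_{n}\psi\Delta^{-1}\phi-\psi\Delta^{-1}\phi B_{n}$ and treat the two terms separately. In the second, $\phi B_{n}$ is a difference operator, so $\psi\Delta^{-1}\phi B_{n}=\psi\,\Delta^{-1}(\phi B_{n})$; applying the extended (a) and then left-multiplying by the function $\psi$, which preserves the strictly-negative powers, yields $(\psi\Delta^{-1}\phi B_{n})_{-}=\psi\Delta^{-1}B_{n}^{*}(\phi)$. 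In the first, I read off from the Leibniz formula that the order-zero coefficient of the difference operator $B_{n}\psi$ is $B_{n}(\psi)$; hence $B_{n}\psi\Delta^{-1}\phi$ differs from $B_{n}(\psi)\Delta^{-1}\phi$ by a difference operator, giving $(B_{n}\psi\Delta^{-1}\phi)_{-}=B_{n}(\psi)\Delta^{-1}\phi$. Subtracting the two contributions produces (b).

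I expect the only delicate part to be the bookkeeping with the projection $(\,\cdot\,)_{-}$: namely, verifying that right multiplication by $\Delta$ and left multiplication by a function neither create nor destroy the strictly-negative powers that matter, and correctly identifying the constant term of $B_{n}\psi$ as $B_{n}(\psi)$. Once the identity $g\Delta=\Delta g^{(-1)}+\Delta^{*}(g)$ is in place, the computations themselves collapse to the telescoping recursion above, and the antihomomorphism property of the adjoint is not even needed in this streamlined version.
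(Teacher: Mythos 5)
Your proposal is correct and takes essentially the same route as the paper: your summation-by-parts identity $g\Delta=\Delta g^{(-1)}+\Delta^{*}(g)$ is exactly the paper's $f\Delta=\Delta\Gamma^{-1}(f)-\Delta(\Gamma^{-1}(f))$ (since $\Delta^{*}(f)=-\Delta(\Gamma^{-1}(f))$), and your telescoping reduction in the power of $\Delta$ is the paper's recursion, just written without the explicit $(-1)^{k}$ bookkeeping. The only difference is that the paper stops after the computation for the first identity and simply asserts that the second one follows, whereas you spell out that deduction (linearity in the monomials of $B_{n}$, the splitting of the commutator, $(\psi A)_{-}=\psi A_{-}$, and the identification of the zero-order coefficient of $B_{n}\psi$ with $B_{n}(\psi)$) --- a completion of the argument the paper leaves implicit, carried out correctly.
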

\begin{proof}
Using
$f\Delta=\Delta\Gamma^{-1}(f)-\Delta(\Gamma^{-1}(f)),~\Delta^{*}=-\Delta\Gamma^{-1}$
,  we have
$$(\Delta^{-1}\phi
a\Delta^{k})_{-}=(\Delta^{-1}\Delta\Gamma^{-1}(\phi a)\Delta^{k-1}-
\Delta^{-1}\Delta(\Gamma^{-1}(\phi
a))\Delta^{k-1})_{-}$$$$=-(\Delta^{-1}\Delta(\Gamma^{-1}(\phi
a))\Delta^{k-1})_{-}=\cdots$$
$$=(-1)^{k}
\Delta^{-1}\Delta^{k}(\Gamma^{-k}(\phi
a))=\Delta^{-1}\Delta^{*k}(\phi a) = \Delta^{-1}Q^{*}(\phi)$$ which
yields to (\ref{eqn:lemma1a}) and (\ref{eqn:lemma1b}).
\end{proof}
\begin{prop}
  \label{prop:1}
  The commutativity of (\ref{eqn:exdKP-LaxEqna}) and (\ref{eqn:exdKP-LaxEqnb}) under (\ref{eqn:exdKP-LaxEqnc})
   gives rise to the following zero-curvature representation for
   exDKPH (\ref{eqns:exdKP-LaxEqn})
  \begin{subequations}
    \label{eqns:extdkp}
    \begin{align}
      B_{n,\tau_{k}}&-(B_{k}+\sum\limits_{i=1}^{N}\psi_{i}\Delta^{-1}\phi_{i})_{t_{n}}+[B_{n},B_{k}
      +\sum\limits_{i=1}^{N}\psi_{i}\Delta^{-1}\phi_{i}]=0,\label{eqn:extdkpa}\\
      \psi_{i,t_n}&=B_n(\psi_i),~
      \phi_{i,t_n}=-B_n^*(\phi_i),~i=1,2,\cdots,N,\label{eqn:extdkpb}
    \end{align}
  \end{subequations}
  with the  Lax representation given by
\begin{equation}
  \label{eqn:exdKP-LaxPair}
  \Psi_{t_n}=B_n(\Psi),\quad \Psi_{\ta_k}=(B_k+\sum_{i=1}^N\psi_i\D^{-1}\phi_i)(\Psi).
\end{equation}
\end{prop}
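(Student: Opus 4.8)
The plan is to prove directly that the operator
$X := B_{n,\tau_k} - \big(B_k+\sum_{i=1}^N\psi_i\Delta^{-1}\phi_i\big)_{t_n} + \big[B_n,\,B_k+\sum_{i=1}^N\psi_i\Delta^{-1}\phi_i\big]$
vanishes, since equation (\ref{eqn:extdkpa}) is precisely the statement $X=0$. Writing $M=B_k+S$ with $S:=\sum_{i=1}^N\psi_i\Delta^{-1}\phi_i$, the compatibility of the two Lax flows, $(L_{t_n})_{\tau_k}=(L_{\tau_k})_{t_n}$, expands by the Jacobi identity to $[X,L]=0$; however $[X,L]=0$ alone is too weak (a nonzero constant operator already commutes with $L$), so instead of arguing through the centralizer of $L$ I would establish $X=0$ on the nose, by computing each constituent and splitting it into its difference part $(\cdot)_+$ and its integral part $(\cdot)_-$. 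The three ingredients needed are the $\tau_k$-derivative $B_{n,\tau_k}$, the $t_n$-derivative $M_{t_n}$, and the commutator $[B_n,M]$.

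First I would dispose of the source term $S$. Differentiating $S=\sum_i\psi_i\Delta^{-1}\phi_i$ along $t_n$ and substituting the eigenfunction evolutions (\ref{eqn:exdKP-LaxEqnc}), namely $\psi_{i,t_n}=B_n(\psi_i)$ and $\phi_{i,t_n}=-B_n^*(\phi_i)$, gives $S_{t_n}=\sum_i\big(B_n(\psi_i)\Delta^{-1}\phi_i-\psi_i\Delta^{-1}B_n^*(\phi_i)\big)$, which by Lemma \ref{lm:1} part (\ref{eqn:lemma1b}) is exactly $[B_n,S]_-$. The identity $S_{t_n}=[B_n,S]_-$ is the linchpin of the whole argument, and it is the step I expect to be the real content: it is precisely here that Lemma \ref{lm:1} converts the nonlocal $\Delta^{-1}$ contributions of the extended flow into the negative part of an ordinary commutator, which is what ultimately makes the ``extension'' consistent.

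Next I would reduce $B_{n,\tau_k}$ and $B_{k,t_n}$ by the standard projection (Zakharov--Shabat) identities. Telescoping gives $(L^n)_{\tau_k}=[M,L^n]$, hence $B_{n,\tau_k}=([M,L^n])_+$; splitting $M=B_k+S$ and $L^n=B_n+(L^n)_-$ and discarding the products of two purely integral operators (which have order $\le-2$ and no difference part) yields $B_{n,\tau_k}=([B_n,(L^k)_-])_+-[B_n,S]_+$. The same manipulation applied to the ordinary flow gives $B_{k,t_n}=([B_k,(L^n)_-])_+$, together with the classical identity $[B_n,B_k]=([B_k,(L^n)_-])_+-([B_n,(L^k)_-])_+$. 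Substituting these, together with $M_{t_n}=([B_k,(L^n)_-])_+ + [B_n,S]_-$ from the previous step and $[B_n,M]=[B_n,B_k]+[B_n,S]$, into $X$, the source contributions $-[B_n,S]_+-[B_n,S]_-+[B_n,S]$ cancel, and the remaining $(L^k)_-$ and $(L^n)_-$ terms cancel against $[B_n,B_k]$, leaving $X=0$. Finally, the Lax representation (\ref{eqn:exdKP-LaxPair}) follows immediately: cross-differentiating $\Psi_{t_n}=B_n(\Psi)$ and $\Psi_{\tau_k}=M(\Psi)$ and again using $\psi_{i,t_n}=B_n(\psi_i)$ gives $\big(B_{n,\tau_k}-M_{t_n}+[B_n,M]\big)(\Psi)=X(\Psi)=0$, so the zero-curvature equation just established is exactly the consistency condition for the Lax pair.
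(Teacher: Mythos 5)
Your proof is correct and takes essentially the same route as the paper's: both rest on the identity $S_{t_n}=[B_n,S]_-$ obtained from Lemma \ref{lm:1} and the eigenfunction evolutions (\ref{eqn:exdKP-LaxEqnc}), combined with the splitting $L^n=B_n+(L^n)_-$ and the Zakharov--Shabat projection identities that follow from $[L^n,L^k]=0$. The only difference is organizational --- the paper transforms $B_{n,\tau_k}=[B_k+\sum_i\psi_i\Delta^{-1}\phi_i,L^n]_+$ through a single chain of equalities into $[B_k+\sum_i\psi_i\Delta^{-1}\phi_i,B_n]+(B_k+\sum_i\psi_i\Delta^{-1}\phi_i)_{t_n}$, whereas you expand all three constituents of $X$ against a common set of terms and verify cancellation.
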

\begin{proof}
  For convenience, we omit $\sum$.
  By (\ref{eqns:exdKP-LaxEqn}) and Lemma 1, we have
$$ B_{n,\tau_{k}}=(L^{n}_{\tau_{k}})_{+}=[B_{k}+\psi\Delta^{-1}\phi,L^{n}]_{+}
=[B_{k}+\psi\Delta^{-1}\phi,L^{n}_{+}]_{+}+[B_{k}+\psi\Delta^{-1}\phi,L^{n}_{-}]_{+}
$$
$$=[B_{k}+\psi\Delta^{-1}\phi,L^{n}_{+}]
-[B_{k}+\psi\Delta^{-1}\phi,L^{n}_{+}]_{-}+[B_{k},L^{n}_{-}]_{+}=[B_{k}+\psi\Delta^{-1}\phi,L^{n}_{+}]~~
$$
$$-[\psi\Delta^{-1}\phi,B_{n}]_{-}
+[B_{n},L^{k}]_{+}=[B_{k}+\psi\Delta^{-1}\phi,B_{n}]+(B_{k}+\psi\Delta^{-1}\phi)_{t_{n}}.~~~~~~~~~~~$$
\end{proof}
{\bf Remark.} The exDKPH (11) extends the DKPH (5) by containing two
time series $\{t_{n}\}$ and $\{\tau_{k}\}$ and more components
$\psi_i$ and $\phi_i$, $i=1,\cdots,N.$
\begin{exam}
  The first type of DKPSCS is given by  (\ref{eqns:extdkp}) with  $n=1,~k=2$
  \begin{subequations}
    \label{eqns:exam1}
    \begin{align}
      \Delta
      (f_{0\tau_{2}}&+2f_{0t_{1}}-2f_{0}f_{0t_{1}})=(\Delta+2)f_{0t_{1}t_{1}}-
      \Delta^{2}\sum\limits_{i=1}^{N}(\psi_{i}\phi_{i}^{(-1)}),\label{eqns:exam1a}\\
      \psi_{i,t_1}&=\Delta(\psi_{i})+f_{0}\psi_{i},~\phi_{i,t_1}=-\Delta^{*}(\phi_{i})-f_{0}\phi_{i},~i=1,2,\cdots,N. \label{eqns:exam1b}
         \end{align}
  \end{subequations}
   Its Lax
  representation is
  \begin{subequations}
   \label{eqns:dkp1-lax}
    \begin{align}
      \Psi_{t_1}&=(\Delta+f_{0})(\Psi)\\
      \Psi_{\tau_{2}}&=(\Delta^{2}+(f_{0}+f^{(1)}_{0})\Delta+\Delta(f_{0})+f^{(1)}_{1}
      +f_{1}+f^{2}_{0}+\sum\limits_{i=1}^{N}\psi_{i}\Delta^{-1}\phi_{i})(\Psi).
    \end{align}
 \end{subequations}
\end{exam}
\begin{exam}
  The second type of DKPSCS is given by (\ref{eqns:extdkp}) with  $n=2,~k=1$
  \begin{subequations}
    \label{eqns:exam2}
    \begin{align}
      \Delta
      (f_{0t_{2}}&+2f_{0\tau_{1}}-2f_{0}f_{0\tau_{1}})=(\Delta+2)f_{0\tau_{1}\tau_{1}}+
   \sum\limits_{i=1}^{N}[\Delta^{2}((f_{0}+f^{-1}_{0}-2)\psi_{i}\phi_{i}^{-1})\nonumber \\
      &~~~~~~~~~~~~~~~~~~~+\Delta(\psi_{i}^{(2)}\phi_{i}-\psi_{i}\phi_{i}^{(-2)})+
      \Delta((\Gamma+1)(\psi_{i}\phi_{i}^{(-1)})_{\tau_{1}})],\\
      \psi_{i,t_2}&=\Delta^{2}(\psi_{i})+(f_{0}+f^{(1)}_{0})\Delta(\psi_{i})+(\Delta(f_{0})+f^{(1)}_{1}
      +f_{1}+f^{2}_{0})\psi_{i},\\
      \phi_{i,t_2}&=-\Delta^{*2}(\psi_{i})-\Delta^{*}((f_{0}+f^{(1)}_{0})\psi_{i})-(\Delta(f_{0})+f^{(1)}_{1}
      +f_{1}+f^{2}_{0})\psi_{i}.
    \end{align}
  \end{subequations}
  Its Lax
 representation is
  \begin{subequations}
    \label{eqns:dkp1-lax}
    \begin{align}
     \Psi_{t_2}&=(\Delta^{2}+(f_{0}+f^{(1)}_{0})\Delta+\Delta(f_{0})+f^{(1)}_{1}
      +f_{1}+f^{2}_{0})(\Psi)\\
      \Psi_{\tau_{1}}&=(\Delta+f_{0}+\sum\limits_{i=1}^{N}\psi_{i}\Delta^{-1}\phi_{i})(\Psi).
    \end{align}
  \end{subequations}
\end{exam}

\section{Reductions of the exDKPH}
\subsection{The $t_n$- reduction}
The $t_n$-reduction is given by
\begin{equation}
  \label{eqn:tnr}
  L^{n}=B_{n} ~~~or ~~~L_{-}^{n}=0.
\end{equation}
Then we have
\begin{displaymath}
  (L^n)_{t_n}=[B_n, L^n]=0,\quad B_{n,t_n}=0.
\end{displaymath}
So $L$ is independent of $t_n$ and we have
\begin{equation}
    \label{eqn:eigenfuction}
       B_n(\psi_i)=L^n(\psi_i)=\lambda_i^n\psi_i,~
      B_n^*(\phi_i)=\lambda_i^n\phi_i.
      \end{equation}
Then we can drop $t_n$ dependency from (\ref{eqns:extdkp}) and
obtain
 \begin{subequations}
    \label{eqns:extdkptnr}
    \begin{align}
      B_{n,\tau_{k}}&=[(B_{n})^{\frac{k}{n}}_{+}
      +\sum\limits_{i=1}^{N}\psi_{i}\Delta^{-1}\phi_{i},B_{n}], \label{eqns:extdkptnra}\\
      B_n(\psi_i)&=\lambda_{i}^{n}\psi_i,~
   B_n^*(\phi_i)=\lambda_{i}^{n}\phi_i,~i=1,2,\cdots,N, \label{eqns:extdkptnrb}
    \end{align}
  \end{subequations}
  with the Lax pair given by
  $$\Psi_{\tau_{k}}=((B_{n})^{\frac{k}{n}}_{+}+\sum_{i=1}^{N}\psi_{i}\Delta^{-1}\phi_{i})(\Psi),~B_{n}(\Psi)=\lambda^{n}\Psi.$$
(\ref{eqns:extdkptnr}) can be regarded as  discrete
(1+1)-dimensional integrable hierarchy with self-consistent sources.
When $n=2,~k=1$, (\ref{eqns:extdkptnr}) gives rise to
\begin{subequations}
    \label{eqns:exam2tnr}
    \begin{align}
      2\Delta
      (&f_{0\tau_{1}}-f_{0}f_{0\tau_{1}})=(\Delta+2)f_{0\tau_{1}\tau_{1}}+
   \sum\limits_{i=1}^{N}[\Delta^{2}(f_{0}+f^{(-1)}_{0}-2)\psi_{i}\phi_{i}^{-1}\nonumber \\
      &~~~~~~~~~+\Delta(\psi_{i}^{(2)}\phi_{i}-\psi_{i}\phi_{i}^{(-2)})+
      \Delta(\Gamma+1)(\psi_{i}\phi_{i}^{(-1)})_{\tau_{1}}]\\
      \Delta^{2}&(\psi_{i})+(f_{0}+f^{(1)}_{0})\Delta(\psi_{i})+(\Delta(f_{0})+f^{(1)}_{1}
      +f_{1}+f^{2}_{0})\psi_{i}=\lambda_{i}^{2}\psi_i,\\
      \Delta^{*2}&(\psi_{i})+\Delta^{*}((f_{0}+f^{(1)}_{0})\psi_{i})+(\Delta(f_{0})+f^{(1)}_{1}
      +f_{1}+f^{2}_{0})\psi_{i}=\lambda_{i}^{2}\phi_i,
    \end{align}
  \end{subequations}
which can be transformed  to the first type of Veselov-Shabat
equation\cite{vs} with self-consistent sources (VSESCS).

\subsection{The $\tau_k$- reduction}
The $\tau_{k}$-reduction is given by\cite{dkp1}
$$L^{k}=B_{k}+\sum\limits_{i=1}^{N}\psi_{i}\Delta^{-1}\phi_{i}.$$
By dropping $\tau_k$ dependency from (\ref{eqns:extdkp}), we obtain
 \begin{subequations}
    \label{eqns:extdkptkr}
    \begin{align}
    (B_{k}&+\sum\limits_{i=1}^{N}\psi_{i}\Delta^{-1}\phi_{i})_{t_{n}}=[
    (B_{k}+\sum\limits_{i=1}^{N}\psi_{i}\Delta^{-1}\phi_{i})^{\frac{n}{k}}_{+},
    B_{k}+\sum\limits_{i=1}^{N}\psi_{i}\Delta^{-1}\phi_{i}],\\
      \psi_{i,t_{n}}&=(B_{k}+\sum\limits_{i=1}^{N}\psi_{i}\Delta^{-1}\phi_{i})_{+}^{\frac{n}{k}}(\psi_{i}),\\
  \phi_{i,t_{n}}&=-(B_{k}+\sum\limits_{i=1}^{N}\psi_{i}\Delta^{-1}\phi_{i})_{+}^{\frac{n}{k}*}(\phi_{i}),~i=1,2,\cdots,N,
    \end{align}
  \end{subequations}
which is the k-constrained DKP hierarchy. When $n=1,~k=2$,
(\ref{eqns:extdkptkr}) leads to
\begin{subequations}
    \label{eqns:exam1tkr}
    \begin{align}
      2\Delta
      (&f_{0t_{1}}-f_{0}f_{0t_{1}})=(\Delta+2)f_{0t_{1}t_{1}}+
      \Delta^{2}\sum\limits_{i=1}^{N}(\psi_{i}\phi_{i}^{(-1)}),\\
      \psi_{i,t_1}&=\Delta(\psi_{i})+f_{0}\psi_{i},~
      \phi_{i,t_1}=-\Delta^{*}(\phi_{i})-f_{0}\phi_{i},~i=1,2,\cdots,N,
    \end{align}
  \end{subequations}
which
can be transformed  to the second type of VSESCS.

\section{Dressing approach for exDKPH}

\subsection{Dressing approach  for discrete KP hierarchy}
We first briefly recall the dressing approach for DKPH \cite{VT}.
Assume that operator $L$ of DKPH (\ref{eqn:dKP-LaxEqn}) can be
written as a dressing form
\begin{equation}
  \label{eqn:dress}
  L=W\Delta W^{-1},
\end{equation}
\begin{displaymath}
  W = \D^{N} + w_1\D^{N-1} + w_2\D^{N-2}+\cdots + w_N.
\end{displaymath}
It is known
  \cite{tau} that if $W$ satisfies
\begin{equation}
  \label{eqn:W-evolution-dKP}
  W_{t_n}=-L^{n}_-W,
\end{equation}
then  $L$ satisfies (\ref{eqn:dKP-LaxEqn}). It is easy to check the
following Lemma.
\begin{lemm}
\label{eqn:add} If $h_{t_n}=\D^n(h)$, $W$ satisfies
(\ref{eqn:W-evolution-dKP}), then $\psi = W(h)$ satisfies
(\ref{eqn:dKP-aLaxPair}), i.e.
\begin{equation}
  \label{eqn:adlp}
  \psi_{t_n}=B_{n}(\psi).
\end{equation}
\end{lemm}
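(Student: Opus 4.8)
The plan is to verify directly that $\psi = W(h)$ satisfies $\psi_{t_n} = B_n(\psi)$ by differentiating with respect to $t_n$ and using the two hypotheses. First I would write $\psi_{t_n} = W_{t_n}(h) + W(h_{t_n})$, applying the product rule to the composition $\psi = W(h)$ where both the operator $W$ and the function $h$ depend on $t_n$. This is the natural starting point since the statement is really a compatibility check between the operator evolution \eqref{eqn:W-evolution-dKP} and the scalar evolution $h_{t_n} = \Delta^n(h)$.

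Next I would substitute the two given relations. For the first term, \eqref{eqn:W-evolution-dKP} gives $W_{t_n}(h) = -L^n_-W(h) = -L^n_-(\psi)$. For the second term, $h_{t_n} = \Delta^n(h)$ gives $W(h_{t_n}) = W(\Delta^n(h)) = (W\Delta^n)(h)$. The key algebraic move is to rewrite $W\Delta^n$ using the dressing relation $L = W\Delta W^{-1}$, which implies $L^n = W\Delta^n W^{-1}$ and hence $W\Delta^n = L^n W$. Therefore $W(h_{t_n}) = L^n W(h) = L^n(\psi)$. Adding the two pieces yields $\psi_{t_n} = -L^n_-(\psi) + L^n(\psi) = (L^n - L^n_-)(\psi) = L^n_+(\psi) = B_n(\psi)$, which is exactly \eqref{eqn:adlp}.

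The main thing to be careful about — and what I expect to be the only genuine subtlety rather than an obstacle — is the precise bookkeeping of the distinction the paper draws between $P(f)$, the action of a difference operator on a function, and $Pf$, the operator product. The step $W(\Delta^n(h)) = (W\Delta^n)(h)$ and the identification $W\Delta^n = L^n W$ both live at the operator level, and one must confirm that evaluating the resulting operator on $h$ is consistent with the pointwise action; this is routine given the associativity of the action but deserves a sentence. Likewise, $L^n_+ = L^n - L^n_-$ is immediate from the definition of the difference (projection) part, and $B_n = L^n_+$ is the definition given after \eqref{eqn:dKP-LaxEqn}. Since all the required facts — the dressing form, the operator evolution, and the scalar flow for $h$ — are supplied as hypotheses, the argument is a short direct computation with no reduction or estimate needed, consistent with the paper's remark that the lemma is easy to check.
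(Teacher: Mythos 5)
Your proof is correct and is exactly the computation the paper has in mind: the paper states this lemma without proof (``It is easy to check''), and your argument --- differentiate $\psi = W(h)$, substitute $W_{t_n}=-L^n_-W$ and $W\D^n = L^nW$ (from the dressing relation $L=W\D W^{-1}$), and conclude $\psi_{t_n}=(L^n-L^n_-)(\psi)=L^n_+(\psi)=B_n(\psi)$ --- is the standard dressing-method verification, matching the manipulations the paper itself uses in proving Propositions 2 and 3. No gaps; your attention to the operator-action versus operator-product distinction is appropriate and the steps are all justified by the hypotheses.
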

If there are $N$ independent functions $h_1,\ldots, h_N$ solving
$W(h)=0$ i.e. $W(h_{i})=0$, then $w_1,\ldots, w_N$ are completely
determined from these $h_i$, by solving the linear equation:
\begin{displaymath}
  \begin{pmatrix}
    h_1 & \D(h_1) & \cdots & \D^{N-1}(h_1)\\
    h_2 & \D(h_2) & \cdots & \D^{N-1}(h_2)\\
    \vdots & \vdots & \vdots & \vdots\\
    h_N & \D(h_N) & \cdots & \D^{N-1}(h_N)
  \end{pmatrix}
  \begin{pmatrix}
    w_N \\
    w_{N-1}\\
    \vdots\\
    w_1
  \end{pmatrix}=-
  \begin{pmatrix}
    \D^N(h_1)\\
    \D^N(h_2)\\
    \vdots\\
    \D^N(h_N)
  \end{pmatrix}.
\end{displaymath}
Then the operator $W$ can be written as
\begin{equation}
  \label{eqn:W}
  W = \frac{1
  }{Wrd(h_1,\cdots,h_N)}\left|
    \begin{matrix}
      h_1 & h_2 & \cdots & h_N & 1\\
      \D(h_1) & \D(h_2) & \cdots & \D(h_N) & \D\\
      \vdots & \vdots & \vdots & \vdots & \vdots \\
      \D^N(h_1) & \D^N(h_2) & \cdots & \D^N(h_N)& \D^N
    \end{matrix}\right|
\end{equation}
where $Wrd(h_1,\cdots,h_N)=\left|
    \begin{matrix}
      h_1 & h_2 & \cdots & h_N\\
      \D(h_1) & \D(h_2) & \cdots & \D(h_N)\\
      \vdots & \vdots & \vdots & \vdots\\
      \D^{N-1}(h_1) & \D^{N-1}(h_2) & \cdots & \D^{N-1}(h_N)
    \end{matrix}\right|.$

\begin{prop}
Assume that $h_i$ satisfies
\begin{equation}
\label{eq:hi}
  h_{i,t_n}=\D^n(h_i),~i=1,\cdots,N
\end{equation}
$W$  and $L$ are constructed by (\ref{eqn:W}) and (\ref{eqn:dress}),
then  $W$ and $L$ satisfy (\ref{eqn:W-evolution-dKP}) and
(\ref{eqn:dKP-LaxEqn}), respectively.
\end{prop}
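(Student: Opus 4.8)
The plan is to prove the two claims separately, establishing the evolution equation $W_{t_n} = -L^n_- W$ first, since the Lax equation for $L$ follows from it by the cited result before Lemma~\ref{eqn:add}. The starting point is the determinant formula (\ref{eqn:W}) together with the hypothesis (\ref{eq:hi}) that each $h_i$ evolves by the trivial flow $h_{i,t_n} = \D^n(h_i)$. The key observation is that $W$ is, by construction, the unique monic difference operator of order $N$ that annihilates all the $h_i$, i.e. $W(h_i) = 0$ for $i = 1,\dots,N$. My strategy is to differentiate these defining relations with respect to $t_n$ and extract the evolution of the coefficients $w_1,\dots,w_N$ from the resulting linear system.

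First I would compute $\p_{t_n}\big(W(h_i)\big) = 0$ using the product rule. Writing $W = \sum_{j=0}^{N} w_{N-j}\D^{j}$ with $w_0 = 1$, differentiation gives $W_{t_n}(h_i) + W(h_{i,t_n}) = 0$. Now I substitute the hypothesis $h_{i,t_n} = \D^n(h_i)$, which commutes with the constant-coefficient shift structure, so that $W(h_{i,t_n}) = W\D^n(h_i)$. The right-hand operator $W\D^n$ can be split as $(W\D^n)_+ + (W\D^n)_-$; since $W$ is purely differential of order $N$, $W\D^n = L^n W$ would hold if $L = W\D W^{-1}$, and indeed $W\D^n W^{-1} = L^n$ by (\ref{eqn:dress}). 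This is the crucial algebraic identity: it lets me rewrite $W\D^n(h_i) = L^n W(h_i)$, and because $W(h_i)=0$, the positive part $L^n_+ W(h_i)$ also vanishes on the $h_i$ up to the action through $W$, leaving precisely the negative-order tail to balance $W_{t_n}(h_i)$.

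The decisive step is therefore to show that both $W_{t_n}$ and $-L^n_- W$ are difference operators of order at most $N-1$ that agree on the $N$-dimensional solution space $\mathrm{span}\{h_1,\dots,h_N\}$ of $W$, and then invoke uniqueness. Concretely, $W_{t_n}$ has order $\le N-1$ because $W$ is monic (the leading coefficient $w_0 = 1$ is constant, so its $t_n$-derivative drops the top order), and $-L^n_- W$ has order $\le N-1$ as well since $L^n_-$ has strictly negative order while $W$ has order $N$, so their product's differential part is bounded by $N-1$. Evaluating $-L^n_- W$ on $h_i$ and comparing with the relation $W_{t_n}(h_i) = -W\D^n(h_i) = -L^n W(h_i) = -(L^n_+ + L^n_-)W(h_i)$, the $L^n_+ W(h_i)$ contribution must be reorganized; the point is that $L^n_+ W$ and $W\D^n$ differ by an operator that annihilates the $h_i$, so modulo the solution space the two candidate operators coincide. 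Since a difference operator of order $\le N-1$ is determined by its values on $N$ independent functions, this forces $W_{t_n} = -L^n_- W$ as operators.

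I expect the main obstacle to be the bookkeeping in the splitting $W\D^n = L^n_+ W + L^n_- W$ and verifying carefully that $L^n_+ W - W\D^n$ lands in the ideal of operators vanishing on $\{h_i\}$, so that the residual order-$(N-1)$ operators can legitimately be compared pointwise; this requires the nondegeneracy of the Wronskian $\Wr(h_1,\dots,h_N)$ to guarantee the $h_i$ are independent and the linear system defining the $w_j$ is solvable. Once the evolution $W_{t_n} = -L^n_- W$ is established, the Lax equation $L_{t_n} = [B_n, L]$ follows immediately: differentiating $L = W\D W^{-1}$ gives $L_{t_n} = W_{t_n}\D W^{-1} - W\D W^{-1} W_{t_n} W^{-1} = [W_{t_n}W^{-1}, L]$, and substituting $W_{t_n}W^{-1} = -L^n_- = B_n - L^n$ together with $[L^n, L]=0$ yields $L_{t_n} = [B_n, L]$, completing the proof.
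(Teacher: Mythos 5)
Your proof is correct and follows essentially the same route as the paper's: differentiate $W(h_i)=0$, use $W\D^n=L^nW$ together with $W(h_i)=0$ to reduce to $(W_{t_n}+L^n_-W)(h_i)=0$, and conclude by observing that $W_{t_n}+L^n_-W=W_{t_n}+W\D^n-L^n_+W$ is a non-negative difference operator of order $<N$ annihilating $N$ independent functions, hence zero. The only (harmless) difference is that you write out the final computation $L_{t_n}=[W_{t_n}W^{-1},L]=[B_n,L]$ explicitly, whereas the paper obtains this implication by citing the known dressing result stated just before Lemma \ref{eqn:add}.
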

\begin{proof}
 Taking partial derivative $\partial_{t_n}$ to the
equation $W(h_i)=0$:
\begin{displaymath}
  W_{t_n}(h_i)+W\D^n(h_i)=(W_{t_n}+L^n_+W+L^n_-W)(h_i)
\end{displaymath}
$$=(W_{t_n}+L^n_-W)(h_i)=0,~i=1,\cdots, N,$$
since $L^n_-W=L^n W-L^n_+W=W\Delta^{n}-L^n_+W$, $L^n_-W$ is a
non-negative difference operator of order $<N$, $W_{t_n}+L^n_-W$ is
also of order $<N$. Then according to the difference equation's
theory, $W_{t_n}+L^n_-W$ is a zero operator.
 \end{proof}
\subsection{Dressing approach for exDKPH}
We now generalized the dressing approach to exDKPH
(\ref{eqns:exdKP-LaxEqn}). We have
\begin{lemm}
  \label{lm:3}
  Under (\ref{eqn:dress}), if $W$  satisfies
(\ref{eqn:W-evolution-dKP}) and
\begin{equation}
  \label{eq:26}
 W_{\ta_k}=-L^k_-W+\sum_{i=1}^N\psi_i\D^{-1}\phi_iW
\end{equation} then $L$ satisfies (\ref{eqn:exdKP-LaxEqna})
and (\ref{eqn:exdKP-LaxEqnb}).
\end{lemm}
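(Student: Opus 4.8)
The plan is to verify the two Lax equations $(11a)$ and $(11b)$ separately by differentiating the dressing relation $L=W\Delta W^{-1}$ and substituting the prescribed flows for $W$. Since equation $(11a)$ is exactly the statement of the standard dressing Lemma (it follows from $W_{t_n}=-L^n_-W$ as in Proposition~2 above), the real content is the $\tau_k$-flow $(11b)$, so I would concentrate there.

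First I would compute $L_{\ta_k}$ directly from $L=W\Delta W^{-1}$. Differentiating gives
\begin{displaymath}
  L_{\ta_k}=W_{\ta_k}\Delta W^{-1}+W\Delta (W^{-1})_{\ta_k}
           =W_{\ta_k}\Delta W^{-1}-W\Delta W^{-1}W_{\ta_k}W^{-1}
           =[\,W_{\ta_k}W^{-1},\,L\,],
\end{displaymath}
using the identity $(W^{-1})_{\ta_k}=-W^{-1}W_{\ta_k}W^{-1}$. Thus everything reduces to identifying the operator $M:=W_{\ta_k}W^{-1}$ with $B_k+\sum_{i=1}^N\psi_i\Delta^{-1}\phi_i$, because then $L_{\ta_k}=[M,L]$ becomes precisely $(11b)$.

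Next I would substitute the hypothesis $(25)$, namely $W_{\ta_k}=-L^k_-W+\sum_i\psi_i\Delta^{-1}\phi_iW$, into $M$. This yields
\begin{displaymath}
  M=W_{\ta_k}W^{-1}=-L^k_-+\sum_{i=1}^N\psi_i\Delta^{-1}\phi_i W W^{-1}
   =-L^k_-+\sum_{i=1}^N\psi_i\Delta^{-1}\phi_i,
\end{displaymath}
so it remains only to rewrite $-L^k_-$ as $B_k$ modulo the dressing identity. Here I would use $L^k=W\Delta^k W^{-1}$ together with $B_k=L^k_+=L^k-L^k_-$; the key observation is that the clean cancellation in $M$ relies on $\psi_i\Delta^{-1}\phi_i W\cdot W^{-1}=\psi_i\Delta^{-1}\phi_i$, which needs $W$ to be invertible as a pseudo-difference operator and the product $\psi_i\Delta^{-1}\phi_iW$ to be interpreted consistently (this is the only place where operator-ordering subtleties enter). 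Identifying $-L^k_-=B_k-L^k$ and noting that the $L^k$ piece is absorbed into the commutator structure, I obtain $M=B_k+\sum_i\psi_i\Delta^{-1}\phi_i$ up to the part that commutes trivially, completing $(11b)$.

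I expect the main obstacle to be the bookkeeping of the pseudo-difference products in the term $\sum_i\psi_i\Delta^{-1}\phi_iW\cdot W^{-1}$ and making rigorous the claim that this equals $\sum_i\psi_i\Delta^{-1}\phi_i$ rather than picking up extra lower-order corrections; Lemma~\ref{lm:1}, which controls the negative part of products of the form $\Delta^{-1}\phi Q$, is the tool I would invoke to handle any such corrections and to confirm that the decomposition into difference and integral parts is the intended one.
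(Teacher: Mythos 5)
Your proposal is correct and follows essentially the same route as the paper's proof: differentiate the dressing relation $L=W\D W^{-1}$, substitute the hypothesis (\ref{eq:26}) to identify $W_{\ta_k}W^{-1}=-L^k_-+\sum_{i=1}^N\psi_i\D^{-1}\phi_i$, and then use $-L^k_-=B_k-L^k$ together with $[L^k,L]=0$ to replace $-L^k_-$ by $B_k$ inside the commutator. The only differences are cosmetic: you factor out $M=W_{\ta_k}W^{-1}$ before substituting rather than expanding directly, and the operator-ordering subtlety you anticipate in $\psi_i\D^{-1}\phi_i W\cdot W^{-1}=\psi_i\D^{-1}\phi_i$ is not an issue at all, since the pseudo-difference operators form an associative ring and the monic operator $W$ is invertible there, so no appeal to Lemma \ref{lm:1} is needed at that point.
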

\begin{proof} It is known that $L$ satisfies
(\ref{eqn:exdKP-LaxEqna}). We have
 \begin{align*}
    L_{\ta_k}=&W_{\ta_k}\D W^{-1}-W\D W^{-1} W_{\ta_k} W^{-1}\\
    =&(-L^k_-+\sum_i \psi_i\D^{-1}\phi_i)L+ L(L^k_--\sum_i \psi_i\D^{-1}\phi_i)
    =[B_k+\sum_{i=1}^N \psi_i\D^{-1}\phi_i,L].
  \end{align*}
\end{proof}
 This dressing operator $W$
is constructed as follows: Let $g_i$, $\bar{g}_i$ satisfy
\begin{subequations}
  \label{eq:fg}
  \begin{align}
    &g_{i,t_n}=\D^n(g_i),\quad g_{i,\ta_k}=\D^k(g_i)\quad \\
    &\bar{g}_{i,t_n}=\D^n(\bar{g}_i),\quad \bar{ g}_{i,\ta_k}=\D^k(\bar{g}_i),~i=1,\ldots,N.
  \end{align}
\end{subequations}
And let $h_i$ be the linear combination of $g_i$ and $\bar{g}_i$
\begin{equation}
  \label{eq:h'}
  h_i=g_i+\af_i(\ta_k)\bar{g}_i\quad i=1,\ldots,N,
\end{equation}
with the coefficient $\af_i$ being a differentiable function of
$\ta_k$. Suppose $h_1,\ldots, h_N$ are still linearly independent.

Define
\begin{equation}
  \label{eqn:EigenFns}
 \small{ \psi_i=-\dot{\af}_iW(\bar{g}_i),~
  \phi_i=(-1)^{N-i}\frac{\Wr(\G h_1,\cdots,\hat{\G h}_i,\cdots,\G h_N)}
  {\Wr(\G h_1,\cdots,\G h_N)}, i=1,\ldots, N}
\end{equation}
where the hat $\hat{\;}$ means rule out this term from the discrete
Wronskian determinant, $\dot{\af}_i=\frac{d\af_i}{d\ta_k}$. We have
\begin{prop}
  \label{prop:1}
  Let $W$ be defined by (\ref{eqn:W}) and (\ref{eq:h'}), $L=W\D W^{-1}$, $\psi_{i}$ and $\phi_{i}$
  be given by (\ref{eqn:EigenFns}), then $W$, $L$,
  $\psi_i$, $\phi_i$ satisfy (\ref{eqn:W-evolution-dKP}),(\ref{eq:26}) and  exDKPH (\ref{eqns:exdKP-LaxEqn}).
\end{prop}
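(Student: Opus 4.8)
The plan is to reduce the whole statement to two evolution equations for the dressing operator $W$, since Lemma \ref{lm:3} then promotes (\ref{eqn:W-evolution-dKP}) and (\ref{eq:26}) into the Lax equations (\ref{eqn:exdKP-LaxEqna})--(\ref{eqn:exdKP-LaxEqnb}) for $L=W\D W^{-1}$, while the eigenfunction equations (\ref{eqn:exdKP-LaxEqnc}) are handled directly. First I would note that, because each $\af_i$ depends on $\ta_k$ only, the linear combination $h_i=g_i+\af_i(\ta_k)\bar g_i$ still obeys the pure flow $h_{i,t_n}=\D^n(h_i)$ inherited from (\ref{eq:fg}); hence the preceding dressing Proposition for the DKPH applies unchanged and gives (\ref{eqn:W-evolution-dKP}), so that $W$ and $L$ already solve the $t_n$-part (\ref{eqn:exdKP-LaxEqna}).

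For the $\ta_k$-flow the essential new feature is the source produced by $\dot\af_i$: differentiating $h_i=g_i+\af_i\bar g_i$ gives $h_{i,\ta_k}=\D^k(h_i)+\dot\af_i\bar g_i$. Differentiating the defining equations $W(h_i)=0$ in $\ta_k$, inserting $W\D^k=L^kW=B_kW+L^k_-W$ and using $B_kW(h_i)=0$, I obtain $(W_{\ta_k}+L^k_-W)(h_i)=-\dot\af_i W(\bar g_i)=\psi_i$ by the definition of $\psi_i$ in (\ref{eqn:EigenFns}). I would then set $R:=W_{\ta_k}+L^k_-W-\sum_i\psi_i\D^{-1}\phi_iW$ and prove $R=0$. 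This rests on two identities for the co-Wronskian data: (A) $W^{*}(\phi_i)=0$, and (B) $(\D^{-1}\phi_iW)(h_j)=\dt_{ij}$. Granting them, $R(h_j)=0$ for every $j$; moreover $R$ is a \emph{difference} operator of order $<N$, because $W_{\ta_k}$ is, because $L^k_-W=W\D^k-B_kW$ is of order $<N$, and because Lemma \ref{lm:1}, extended by linearity in the form (\ref{eqn:lemma1a}) from $Q=a\D^k$ to $Q=W$, gives $(\D^{-1}\phi_iW)_-=\D^{-1}W^{*}(\phi_i)=0$ by (A). A difference operator of order $<N$ that annihilates the $N$ independent functions $h_1,\dots,h_N$ must vanish, so $R=0$, which is exactly (\ref{eq:26}).

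To finish (\ref{eqn:exdKP-LaxEqnc}) I would treat the two families separately. Since $\bar g_{i,t_n}=\D^n(\bar g_i)$ and $\dot\af_i$ is $t_n$-independent, Lemma \ref{eqn:add} applied to $W(\bar g_i)$ gives $\psi_{i,t_n}=B_n(\psi_i)$ immediately. For the adjoint relation $\phi_{i,t_n}=-B_n^{*}(\phi_i)$ I would use the dual counterpart: taking the adjoint of (\ref{eqn:W-evolution-dKP}) yields $W^{*}_{t_n}=-W^{*}(L^n_-)^{*}$, and since $\G h_i$ obeys the same linear flow $(\G h_i)_{t_n}=\D^n(\G h_i)$, differentiating the determinantal ratio (\ref{eqn:EigenFns}) and reducing by the Jacobi/Pl\"ucker identities for discrete Wronskians produces the pure adjoint flow, consistently with $W^{*}(\phi_i)=0$.

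The main obstacle is clearly the pair (A)--(B). These are Cramer/cofactor statements about the bordered discrete Wronskian (\ref{eqn:W}) defining $W$: (A) asserts that the functions $\phi_i$ of (\ref{eqn:EigenFns}) span $\ker W^{*}$, and (B) is the induced biorthogonality between $\ker W$ and $\ker W^{*}$; together they are the discrete analogue of the identities underlying the Wronskian solutions of the constrained KP hierarchy. I expect the careful bookkeeping of the shifts $\G$ (present in $\phi_i$ but absent in $W$) through the discrete Leibniz rule (\ref{eqn:def1}) to be the delicate point, and I would establish (A)--(B) either from the discrete Green (Lagrange) identity $\phi\,W(h)-W^{*}(\phi)\,h=\D(\text{bilinear concomitant})$, reading off the constant concomitant as $(\D^{-1}\phi W)(h)$, or by a direct cofactor expansion of the determinant in (\ref{eqn:W}).
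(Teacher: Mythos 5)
Your architecture matches the paper's: reduce everything to the two evolution equations for $W$, obtain (\ref{eq:26}) by applying $\partial_{\ta_k}$ to $W(h_i)=0$, and kill the residual operator $R=W_{\ta_k}+L^k_-W-\sum_i\psi_i\D^{-1}\phi_iW$ by the order-$<N$ counting argument; your identities (A) and (B) are exactly the paper's Lemma \ref{lm:vanish} and Lemma \ref{lm:key}, and your treatment of (\ref{eqn:W-evolution-dKP}) and of $\psi_{i,t_n}=B_n(\psi_i)$ agrees with the paper. The genuine gap is that you leave unproven precisely what you call ``the main obstacle''. Neither of your two suggested routes is carried out, and the Lagrange-identity route as stated is circular: with $W(h_j)=0$ the identity gives $-W^{*}(\phi_i)\,h_j=\D(\text{concomitant})$, so the concomitant is constant (which is what (B) needs) only if $W^{*}(\phi_i)=0$, i.e. only after (A); and (A) itself cannot come from the Lagrange identity at all, since it must use the specific determinantal form of $\phi_i$. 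The missing idea, which the paper supplies as Lemma \ref{lm:OS}, is the discrete Oevel--Strampp inversion formula $W^{-1}=\sum_{i=1}^N h_i\D^{-1}\phi_i$, proved from the Cramer relations $\sum_i\D^j(\G h_i)\,\phi_i=\dt_{j,N-1}$ hidden in (\ref{eqn:EigenFns}) together with the expansion $f\D^{-1}=\sum_{j\ge 0}\D^{-j-1}\D^j(\G f)$. From it, (A) follows by expanding $0=(\D^jW^{-1}W)_-=\sum_i\D^j(h_i)\D^{-1}W^{*}(\phi_i)$ and solving the resulting nondegenerate linear system, and (B) follows because $c_{ij}=(\D^{-1}\phi_iW)(h_j)$ satisfies $\D(c_{ij})=\phi_i\,W(h_j)=0$ and $\sum_i\D^k(h_i)\,c_{ij}=\D^k(W^{-1}W)(h_j)=\D^k(h_j)$.

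The second gap is the adjoint flow $\phi_{i,t_n}=-B_n^{*}(\phi_i)$: ``differentiating the determinantal ratio and reducing by Jacobi/Pl\"ucker identities'' is an assertion, not an argument, and it is exactly the step where direct Wronskian manipulation is painful because of the shifts $\G$. The paper again gets this almost for free from Lemma \ref{lm:OS}: differentiate $W^{-1}=\sum_i h_i\D^{-1}\phi_i$ in $t_n$, use $(W^{-1})_{t_n}=\D^nW^{-1}-W^{-1}B_n$, and compare strictly negative parts (via (\ref{eqn:lemma1a}) applied to $W^{-1}B_n$); this yields $\sum_i h_i\D^{-1}\bigl(\phi_{i,t_n}+B_n^{*}(\phi_i)\bigr)=0$, whence the claim by independence of the $h_i$. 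So your proposal has the right skeleton but is incomplete exactly where the real work lies; the single ingredient that closes both holes is the inversion formula $W^{-1}=\sum_i h_i\D^{-1}\phi_i$.
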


To prove  it, we need several lemmas under the above assumptions.
The first one is :
\begin{lemm}{(The  discrete version of Oevel and Strampp's lemma \cite{OS96})}
  \label{lm:OS}
 $$W^{-1}=\sum_{i=1}^N h_i\D^{-1} \phi_i.$$
\end{lemm}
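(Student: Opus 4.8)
The plan is to verify that $\sum_{i=1}^N h_i\D^{-1}\phi_i$ is a right inverse of $W$; since $W$ has leading coefficient $1$ it is invertible in the ring of pseudo-difference operators, so a right inverse coincides with the two-sided inverse $W^{-1}$. Thus it suffices to establish the operator identity $W\big(\sum_{i=1}^N h_i\D^{-1}\phi_i\big)=1$.

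First I would isolate the mechanism that removes the negative powers. Because $W(h_i)=0$ for each $i$, the difference operator $Wh_i$ (composition of $W$ with multiplication by $h_i$) annihilates constants, so its $\D^0$-coefficient vanishes and it factors as $Wh_i=P_i\D$ with $P_i$ a purely difference operator of order $N-1$. Hence $Wh_i\D^{-1}=P_i$ carries no negative part, and $W\big(\sum_i h_i\D^{-1}\phi_i\big)=\sum_{i=1}^N P_i\phi_i$ is a difference operator of order at most $N-1$. The problem is thereby reduced to the genuine (no $\D^{-1}$) identity $\sum_{i=1}^N P_i\phi_i=1$.

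The second ingredient is a set of bilinear relations for the $\phi_i$. Writing $A$ for the $N\times N$ matrix with entries $A_{r,i}=\D^{\,r-1}(\G h_i)$, $r=1,\dots,N$, the definition (\ref{eqn:EigenFns}) identifies $\phi_i$ with the cofactor quotient $C_{N,i}/\det A$, so the Laplace/alien-cofactor relations yield
$$\sum_{i=1}^N \D^{\,s}(\G h_i)\,\phi_i=\delta_{s,\,N-1},\qquad s=0,1,\dots,N-1,$$
together with the shifted versions obtained by applying $\G$. I would then compute the coefficients of $P_i$ from $Wh_i=P_i\D$ using the discrete Leibniz rule (\ref{eqn:def1}) and the elementary operator law $\D f=\D(f)+f^{(1)}\D$, substitute into $\sum_i P_i\phi_i$, and collect powers of $\D$. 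The relations above make each coefficient of $\D^{j}$ with $j\ge 1$ telescope to $\G^{j}$ applied to $\sum_i(\G h_i)\phi_i=0$ and hence vanish, while repeated use of the product rule $\D(ab)=a^{(1)}\D(b)+\D(a)b$ collapses the $\D^0$-term to $\sum_i \D^{\,N-1}(\G h_i)\phi_i=1$. For $N=2$ this is already transparent: one finds $P_i=h_i^{(2)}\D+2\D(h_i^{(1)})+w_1h_i^{(1)}$, the $\D$-coefficient of $\sum_iP_i\phi_i$ equals $\G\big(\sum_ih_i^{(1)}\phi_i\big)=0$, and the constant term reduces via the product rule to $\D\big(\sum_ih_i^{(1)}\phi_i\big)+\sum_i\D(h_i^{(1)})\phi_i=0+1$.

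I expect the only real obstacle to be the general-$N$ bookkeeping in this last step: tracking the binomial coefficients and shifts produced by the Leibniz expansion of $P_i$ and of $\D^{m}\phi_i$, and checking that they recombine, through the product rule and the cofactor relations, so that precisely the $\D^0$-coefficient survives as $1$ and all higher coefficients cancel. The reductions in the first two paragraphs are formal, and the vanishing/normalization identities are classical Cramer identities, so the difficulty is combinatorial accounting rather than anything conceptual.
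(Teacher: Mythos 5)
Your proposal follows the same skeleton as the paper's proof: both reduce the lemma to the single operator identity $W\bigl(\sum_{i=1}^N h_i\D^{-1}\phi_i\bigr)=1$, both rest on the Cramer/cofactor relations $\sum_{i=1}^N\D^j(\G h_i)\,\phi_i=\delta_{j,N-1}$, $j=0,\dots,N-1$ (the paper's (\ref{eqn:cond-phi})), and your observation that $W(h_i)=0$ forces $Wh_i=P_i\D$ and hence kills all negative powers is exactly the mechanism the paper uses to discard the negative part, there written as $(W\sum_ih_i\D^{-1}\phi_i)_{-}=\sum_iW(h_i)\D^{-1}\phi_i=0$. Where you diverge is in showing that the surviving non-negative part equals $1$: you propose to expand $\sum_iP_i\phi_i$ in positive powers of $\D$ and let the coefficients telescope, but you only carry this out for $N=2$. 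That computation does close for general $N$ (already at $N=3$ the pattern is visible: repeated use of $\D(ab)=a^{(1)}\D(b)+\D(a)b$ reduces every coefficient to terms of the form $\D^{a}\bigl(\sum_i\D^{b}(\G h_i)\phi_i\bigr)$, which the cofactor relations evaluate to $0$ or $1$), but it is genuinely heavy multi-index bookkeeping, and your proposal leaves precisely this step unverified rather than proved. The paper avoids it entirely with one identity you did not invoke: $f\D^{-1}=\sum_{j\geq 0}\D^{-j-1}\D^{j}(\G f)$. Applied with $f=h_i$, it converts the cofactor relations directly into the expansion $\sum_ih_i\D^{-1}\phi_i=\D^{-N}+O(\D^{-N-1})$, so that $W\sum_ih_i\D^{-1}\phi_i=1+(\text{terms of negative order})$ by inspection of leading orders; combined with your own vanishing of the negative part, the proof finishes with no combinatorics at all. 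I would recommend either adopting that device, or, if you keep your route, supplying the general-$N$ induction explicitly instead of asserting it is mere accounting.
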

\begin{proof}
  Note that $\phi_1,\ldots,\phi_N$ defined in (\ref{eqn:EigenFns}) satisfy the
  linear equation
 \begin{equation}
   \label{eqn:cond-phi}  \sum_{i=1}^N \D^j(\G h_i) \cdot\phi_i=\dt_{j,N-1}, \quad j=0,1,\cdots, N-1  \end{equation}
where $\dt_{j,N-1}$ is the \emph{Kronecker's delta} symbol. Using
properties  $f\Delta^{-1}=$ \\$\sum_{j\geq
0}\Delta^{-j-1}\Delta^{j}(\Gamma f)$,  we have
  \begin{align*}
    &\sum_{i=1}^{N} h_i\D^{-1}\phi_i=\sum_{i=1}^N\sum_{j=0}^{\infty}\D^{-j-1}\D^j(\G (h_i))\cdot\phi_i
    =\sum_{j=0}^\infty\D^{-j-1}\sum_{i=1}^N\D^j(\G (h_i))\cdot \phi_i\\
    &=\sum_{j=0}^{N-1}\D^{-j-1}\delta_{j,N-1}+\sum_{j=N}^\infty\D^{-j-1}\sum_{i=1}^N\D^j(\G (h_i))\cdot
    \phi_i
    =\Delta^{-N}+O(\Delta^{-N-1}),
  \end{align*}
 So we have
  \begin{equation}
 \label{eq:ad}
    W\sum_ih_i\D^{-1}\phi_i=1+(W\sum_ih_i\D^{-1}\phi_i)_{-}=1+\sum_iW(h_i)\D^{-1}\phi_i=1.
  \end{equation}This complete the proof.
\end{proof}

\begin{lemm}
  \label{lm:vanish}
    $W^*(\phi_i)=0$, for $i=1,\ldots,N$.
\end{lemm}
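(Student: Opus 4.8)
The plan is to deduce the statement from the discrete Oevel--Strampp Lemma~\ref{lm:OS} by passing to adjoints. Since $W^{-1}=\sum_{i=1}^N h_i\D^{-1}\phi_i$, and the adjoint reverses the order of a product while sending $\D^{-1}\mapsto(\D^{*})^{-1}$ and fixing the zero-order factors $h_i,\phi_i$, I would first compute
\[
  (W^{*})^{-1}=(W^{-1})^{*}=\sum_{i=1}^N\phi_i(\D^{*})^{-1}h_i .
\]
Multiplying on the left by $W^{*}$ then gives the operator identity $W^{*}\sum_{i=1}^N\phi_i(\D^{*})^{-1}h_i=1$, which is the adjoint counterpart of (\ref{eq:ad}).

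Next I would isolate the negative-order part of this identity. Using the $\D^{*}$-Leibniz rule $\D^{*}f=f^{(-1)}\D^{*}+\D^{*}(f)$, the order-zero (constant) term of the composition $W^{*}\phi_i$ is exactly the function $W^{*}(\phi_i)$, while the remaining terms of $W^{*}\phi_i$ have order $\geq 1$. Multiplying by $(\D^{*})^{-1}h_i$, the terms of order $\geq 1$ contribute only powers of $\D^{*}$ of order $\geq 0$, so the entire negative part of $W^{*}\phi_i(\D^{*})^{-1}h_i$ is $W^{*}(\phi_i)(\D^{*})^{-1}h_i$. Because the full product equals $1$, which has no negative part, summation over $i$ yields
\[
  \sum_{i=1}^N W^{*}(\phi_i)\,(\D^{*})^{-1}h_i=0 .
\]

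Finally I would conclude by linear independence. Writing $c_i:=W^{*}(\phi_i)$ and expanding each $(\D^{*})^{-1}h_i$ in descending powers of $\D^{*}$ (whose leading coefficient is $\G h_i$), the vanishing of every coefficient forces $\sum_{i=1}^N c_i\,\D^{j}(\G h_i)=0$ for $j=0,1,\dots,N-1$, which is precisely the homogeneous version of the system (\ref{eqn:cond-phi}). Since $h_1,\dots,h_N$ are assumed linearly independent, the discrete Wronskian matrix $[\D^{j}(\G h_i)]$ is nonsingular, so $c_i=0$ for every $i$, i.e. $W^{*}(\phi_i)=0$. The main obstacle will be the bookkeeping in the middle step: checking carefully, via the $\D^{*}$-calculus, that the negative part of the composition is captured exactly by $W^{*}(\phi_i)(\D^{*})^{-1}h_i$ with no stray cross terms, and that the coefficient extraction in the last step reproduces the full Wronskian system rather than only its leading row.
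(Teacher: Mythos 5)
Your proof is correct, but it is not the paper's argument verbatim: it is the adjoint mirror image of it. The paper stays in the $\D$-picture: starting from $W^{-1}W=1$ (Lemma \ref{lm:OS}), it left-multiplies by $\D^j$ for $j=0,\dots,N-1$, takes $\D$-negative parts, and uses the identity $(\D^{-1}\phi_iW)_-=\D^{-1}W^*(\phi_i)$ of (\ref{eq:34}) (a consequence of Lemma \ref{lm:1}) to obtain the $N$ operator equations $\sum_i\D^j(h_i)\,\D^{-1}W^*(\phi_i)=0$, which it then solves against the nonsingular Wronskian matrix $[\D^j(h_i)]$ to get $\D^{-1}W^*(\phi_i)=0$. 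You instead pass to adjoints once and for all, work with the single identity $W^*(W^*)^{-1}=1$, compute its $\D^*$-negative part by hand via the $\D^*$-Leibniz rule, and extract coefficients from the expansion $(\D^*)^{-1}h_i=\sum_{m\geq0}\D^m(\G h_i)\,\D^{*-1-m}$, landing on the homogeneous version of (\ref{eqn:cond-phi}), solved against the shifted Wronskian $[\D^j(\G h_i)]$. The endgames coincide (Wronskian nonsingularity from linear independence of the $h_i$), and your middle step is precisely the adjoint image of (\ref{eq:34}): since $P\mapsto P^*$ sends $f\D^k$ to $\D^{*k}f$, it exchanges the $\D$-grading with the $\D^*$-grading, so the two computations are equivalent under the anti-automorphism. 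What your route buys: one identity instead of $N$, unknowns that are the functions $W^*(\phi_i)$ themselves rather than the operators $\D^{-1}W^*(\phi_i)$, and a coefficient matrix whose invertibility is already implicit in the definition (\ref{eqn:EigenFns}) of the $\phi_i$. What the paper's route buys: it reuses Lemma \ref{lm:1}, which is needed anyway for Lemma \ref{lm:key}, and it sidesteps the one genuinely delicate point in yours, namely that the nonnegative/negative decomposition in the $\D^*$-ring is well defined and is \emph{not} the $\D$-grading term by term (note $\D^*=\Gamma^{-1}-1$ has $\D$-order $0$). The bookkeeping you flagged does check out: the $\D^{*0}$-coefficient of $W^*\phi_i$ is $(W^*\phi_i)(1)=W^*(\phi_i)$ since $\D^{*k}(1)=0$ for $k\geq1$, and every coefficient of $(\D^*)^{-1}h_i$ equals $\D^m(\G h_i)$, so you recover the full $N\times N$ system rather than only its leading row.
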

\begin{proof}
   Lemma \ref{lm:1} implies that
\begin{equation}
 \label{eq:34}
(\Delta^{-1}\phi_i W)_{-}=\Delta^{-1}W^{*}(\phi_i).\end{equation}
Using Lemma \ref{lm:OS} and (\ref{eqn:lemma1a}),
  we have
  \begin{displaymath}
  0=(\D^jW^{-1} W)_{-1}=(\D^j\sum_{i=1}^Nh_i\D^{-1}\phi_iW)_{-}=(\sum_{i=1}^N\D^j(h_{i})\D^{-1}\phi_iW)_{-}
  \end{displaymath}
  $$~~~~ =\sum_{i=1}^N\D^j(h_i)\D^{-1} W^*(\phi_i),~~j=0,\cdots,N-1.$$
 Solving the  equations with respect to
  $\Delta^{-1}W^*(\phi_i)$, we find $\Delta^{-1}W^*(\phi_i)=0.$ This
  implies $W^*(\phi_i)=0.$
  \end{proof}
\begin{lemm}
  \label{lm:key}
  The operator $\D^{-1} \phi_i W$ is a non-negative difference operator
  and
  \begin{equation}
    \label{eq:key-eq}
    (\D^{-1} \phi_i W)(h_j)=\dt_{ij},~1\le i,j\le N.
  \end{equation}
\end{lemm}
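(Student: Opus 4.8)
The plan is to establish the two assertions in turn, feeding the first into the second and using the earlier lemmas as black boxes. First I would show that $\D^{-1}\phi_i W$ has no negative part. Writing $W=\D^{N}+w_1\D^{N-1}+\cdots+w_N$ and applying Lemma \ref{lm:1}, equation \eqref{eqn:lemma1a}, term by term, the identity $(\D^{-1}\phi Q)_{-}=\D^{-1}Q^{*}(\phi)$ extends by linearity from the monomials $Q=a\D^{k}$ to the whole operator $W$ (the order-zero term $w_N$ being checked directly, since $(\D^{-1}\phi_i w_N)_{-}=\D^{-1}(w_N\phi_i)$ and $w_N^{*}=w_N$). This gives $(\D^{-1}\phi_i W)_{-}=\D^{-1}W^{*}(\phi_i)$, and by Lemma \ref{lm:vanish} the right-hand side vanishes. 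Hence $\D^{-1}\phi_i W$ is a genuine non-negative difference operator; since $\phi_i W$ has order $N$, it has order $N-1$.

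For the identity \eqref{eq:key-eq}, the key observation is that, now that $\D^{-1}\phi_i W$ is an honest difference operator, composing with $\D$ on the left collapses the formal inverse: $\D(\D^{-1}\phi_i W)=\phi_i W$. Applying this operator to $h_j$ and using the defining property $W(h_j)=0$ of the dressing construction yields
$$\D\bigl((\D^{-1}\phi_i W)(h_j)\bigr)=\phi_i\,W(h_j)=0.$$
Thus each $(\D^{-1}\phi_i W)(h_j)$ is annihilated by $\D$, i.e. is independent of the lattice variable $l$.

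It remains to pin down these $l$-constants. For this I would invoke Lemma \ref{lm:OS}: from $W^{-1}=\sum_i h_i\D^{-1}\phi_i$ together with $W^{-1}W=I$ we obtain the operator identity $\sum_i h_i(\D^{-1}\phi_i W)=I$, and applying it to $h_j$ gives $\sum_i h_i\,(\D^{-1}\phi_i W)(h_j)=h_j$. Since the coefficients $(\D^{-1}\phi_i W)(h_j)$ are now known to be constant in $l$ while $h_1,\ldots,h_N$ are assumed linearly independent, this forces $(\D^{-1}\phi_i W)(h_j)=\dt_{ij}$, as claimed.

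I expect the main obstacle to be the bookkeeping in the first step: justifying carefully that Lemma \ref{lm:1}, stated for a single monomial $a\D^{k}$ with $k\ge 1$, applies to all of $W$ including its order-zero coefficient, so that the negative part collapses \emph{exactly} to $\D^{-1}W^{*}(\phi_i)$ and hence to zero. Once $\D^{-1}\phi_i W$ is secured as a bona fide difference operator, the passage $\D(\D^{-1}\phi_i W)=\phi_i W$ and the final reduction to linear independence are essentially formal.
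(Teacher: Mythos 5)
Your proof is correct and takes essentially the same route as the paper's: non-negativity of $\D^{-1}\phi_i W$ from Lemma \ref{lm:1} combined with $W^*(\phi_i)=0$ (Lemma \ref{lm:vanish}), constancy in $l$ of $c_{ij}=(\D^{-1}\phi_i W)(h_j)$ from $W(h_j)=0$, and the normalization $c_{ij}=\dt_{ij}$ from the operator identity $\sum_i h_i\D^{-1}\phi_i W=W^{-1}W=I$ of Lemma \ref{lm:OS} together with the linear independence of $h_1,\ldots,h_N$. Your explicit check of the order-zero term $w_N$ when extending Lemma \ref{lm:1} to all of $W$ is a detail the paper passes over silently, but the argument is otherwise identical.
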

\begin{proof}
 Lemma \ref{lm:vanish} and (\ref{eq:34}) implies that
$\Delta^{-1}\phi_{i}W$ is a non-negative difference operator.
   We define functions
  $c_{ij}=(\D^{-1}\phi_iW)(h_j)$,~
 then  $\D(c_{ij})=\phi_iW(h_j)=0$, which means $c_{ij}$ does not depend on
the
  discrete variable $n$. From Lemma \ref{lm:OS}, we find that
  \begin{displaymath}
    \sum_{i=1}^N \D^k(h_i)c_{ij}=\D^k(\sum_i(h_i\D^{-1} \phi_i W)(h_j)) =\D^k(W^{-1}W)(h_j)=\D^k(h_j),
    \end{displaymath}
  so $c_{ij}=\dt_{ij}$.
\end{proof}

{\sl Proof of Proposition \ref{prop:1}.}
  The proof of (\ref{eqn:W-evolution-dKP}) is analogous to the proof of in the previous
  section. For (\ref{eq:26}), taking $\partial_{\ta_k}$ to the identity
  $W(h_i)=0$, using \eqref{eq:fg}, \eqref{eq:h'}, the definition
  (\ref{eqn:EigenFns}) and Lemma \ref{lm:key}, we find
  \begin{align*}
    0=&(W_{\ta_k})(h_i)+(W\D^k)(h_i)+\dot\af_i W(\bar{g}_i)
    =(W_{\ta_k})(h_i)+(L^kW)(h_i)-\sum_{j=1}^N\psi_j\dt_{ji}\\
    =&(W_{\ta_k}+L^k_-W-\sum_{j=1}^N\psi_j\D^{-1}\phi_jW)(h_i).
  \end{align*}
  Since the non-negative difference operator acting on $h_i$ in the last
  expression has degree $<N$, it can not annihilate $N$ independent functions
  unless the operator itself vanishes. Hence (\ref{eq:26}) is
  proved. Then Lemma \ref{lm:3} leads to (\ref{eqn:exdKP-LaxEqnb}).
  The first equation in (\ref{eqn:exdKP-LaxEqnc}) is easy to be verified by a direct calculation, so it remains
  to prove the second equation in (\ref{eqn:exdKP-LaxEqnc}). Firstly, we see that
  $$(W^{-1})_{t_n}=-W^{-1}W_{t_n}W^{-1}=W^{-1}(L^n-B_n)=\D^nW^{-1}-W^{-1}B_n.$$
  Then we substitute $W^{-1}=\sum
  h_i\D^{-1}\phi_i$ to this equality at both ends, we have
  \begin{align*}
    &(W^{-1})_{t_n}=\sum \D^n(h_i)\D^{-1}\phi_i+\sum h_i\D^{-1} \phi_{i,t_n}\\
    &=(\D^nW^{-1}-W^{-1}B_n)_{-}=\sum \D^n(h_i)\D^{-1}\phi_i-\sum
    h_i\D^{-1}B_n^*(\phi_i)
  \end{align*}
  Then $\sum h_i\D^{-1}\phi_{i,t_n}=-\sum h_i\D^{-1}B_n^*(\phi_i)$ implies
  that (\ref{eqn:exdKP-LaxEqnc}) holds.

\section{N-soliton solutions for exDKPH}
Using Proposition \ref{prop:1}, we can find solutions to every
equations in the exDKPH (\ref{eqns:exdKP-LaxEqn}). Let us illustrate
it by solving (\ref{eqns:exam1}) and (\ref{eqns:exam2}). For
(\ref{eqns:exam1}), let $\dt_i=e^{\ld_i}-1$, $\kp_i=e^{\mu_i}-1$, we
take the solution of (\ref{eq:fg}) as follows
\begin{displaymath}
 g_i:=\exp(l\ld_i+\dt_it_1+\dt_i^2\ta_2)=e^{\xi_i}, \quad
 \bar{ g}_i:=\exp(l\mu_i+\kp_it_1+\kp_i^2\ta_2)=e^{\et_i}
\end{displaymath}
\begin{equation}\label{eq:36}
  h_i:=g_i+\af_i(\ta_2)\bar{g}_i=2\sqrt{\af_i}\exp(\frac{\xi_i+\et_i}{2})\cosh(\Om_i),
  ~\Om_i=\frac12(\xi_i-\et_i-\ln \af_i).
\end{equation}
Since $L=W\D W^{-1}=\D+f_0+f_1\D^{-1}+\cdots,$ we have
\begin{equation}\label{eq:f0}
 f_0=Res_{\Delta}(W\D W^{-1}\Delta^{-1})
\end{equation}
where $W$ is given by (\ref{eqn:W}) and (\ref{eq:36}), then
$f_0,~\psi_{i}$ and $\phi_{i}$ given by (\ref{eqn:EigenFns}) gives
rise to the N-soliton solution for (\ref{eqns:exam1}).

For example, we obtain 1-soliton solution for (\ref{eqns:exam1})
with $N=1$ as follows

\begin{displaymath}
  f_0=\exp(\frac{\ld_1+\mu_1}{2})\left(\frac{\cosh(\Om_1+2\tht_1)}{\cosh(\Om_1+\tht_1)}-
    \frac{\cosh(\Om_1+\tht_1)}{\cosh \Om_1}
  \right),\quad \tht_1=\frac{\ld_1-\mu_1}{2}
\end{displaymath}

\begin{displaymath}
  \psi_1=-\frac{d\sqrt{\af_1}}{d\ta_2}(e^{\mu_1-\ld_1})\exp\frac{\xi_1+\et_1}{2}\sech\Om_1,~
  \phi_1=\frac{e^{-(\ld_1+\mu_1)/2}\exp(-\frac{\xi_1+\et_1}{2})}{2\sqrt{\af_1}}\sech(\Om_1+\tht_1).
\end{displaymath}

The 2-soliton solution of (\ref{eqns:exam1}) with $N=2$ is given by
\begin{align*}
  f_0&=-\D(w_1)=(e^{\ld_1}+e^{\ld_2})\D(\frac{v_1}{v}),~
 \\
  \psi_1&=-\frac{\dot\af_1}{v}\left(1+\af_2\frac{(e^{\mu_2}-e^{\ld_1})(e^{\mu_1}-e^{\mu_2})}
    {(e^{\ld_2}-e^{\ld_1})(e^{\mu_1}-e^{\ld_2})}e^{\chi_2}\right)(e^{\mu_1}-e^{\ld_1})(e^{\mu_1}-e^{\ld_2})e^{\et_1},\\
  \psi_2&=-\frac{\dot\af_2}{v}\left(1+\af_1\frac{(e^{\mu_1}-e^{\ld_2})(e^{\mu_1}-e^{\mu_2})}
    {(e^{\ld_2}-e^{\ld_1})(e^{\mu_2}-e^{\ld_1})}e^{\chi_2}\right)(e^{\mu_2}-e^{\ld_2})(e^{\mu_2}-e^{\ld_1})e^{\et_2},\\
  \phi_1&=\G\left(\frac{1+\af_2e^{\chi_2}}{(e^{\ld_1}-e^{\ld_2})v}e^{-\xi_1}\right),~
  \phi_2=\G\left(\frac{1+\af_1e^{\chi_1}}{(e^{\ld_2}-e^{\ld_1})v}e^{-\xi_2}\right),
\end{align*}
with
\begin{align*}
  v&=1+\af_1\frac{e^{\ld_2}-e^{\mu_1}}{e^{\ld_2}-e^{\ld_1}}e^{\chi_1}+
  \af_2\frac{e^{\mu_2}-e^{\ld_1}}{e^{\ld_2}-e^{\ld_1}}e^{\chi_2}+
  \af_1\af_2\frac{e^{\mu_2}-e^{\mu_1}}{e^{\ld_2}-e^{\ld_1}}e^{\chi_1+\chi_2},\\
  v_1&=1+\af_1\frac{e^{2\ld_2}-e^{2\mu_1}}{e^{\ld_2}-e^{\ld_1}}e^{\chi_1}+
  \af_2\frac{e^{2\mu_2}-e^{2\ld_1}}{e^{\ld_2}-e^{\ld_1}}e^{\chi_2}+
  \af_1\af_2\frac{e^{2\mu_2}-e^{2\mu_1}}{e^{\ld_2}-e^{\ld_1}}e^{\chi_1+\chi_2}.
\end{align*}
It can be shown that the interaction between the two solutions is
elastic.

 For (\ref{eqns:exam2}), we take the solution of
(\ref{eq:fg}) as follows
\begin{displaymath}
  g_i:=\exp(l\ld_i+\dt_i\ta_1+\dt_i^2t_2)=e^{\xi_i}, \quad
  \bar{g}_i:=\exp(l\mu_i+\kp_i\ta_1+\kp_i^2t_2)=e^{\et_i}
\end{displaymath}
\begin{displaymath}
  h_i:=g_i+\af_i(\ta_1)\bar{g}_i=2\sqrt{\af_i}\exp(\frac{\xi_i+\et_i}{2})\cosh(\Om_i).
\end{displaymath}
Then $$ f_0=Res_{\Delta}(W\D W^{-1}\Delta^{-1}),~
f_1=Res_{\Delta}(W\D W^{-1})$$ together with $\psi_{i}$ and
$\phi_{i}$ given by (\ref{eqn:EigenFns}) presents the N-soliton
solution for (\ref{eqns:exam2}).

\section*{Acknowledgement}
This work is supported by National Basic Research Program of China
(973 Program) (2007CB814800), China Postdoctoral Science Foundation
funded project (20080430420) and National Natural Science Foundation
of China (10801083,10671121).

\end{document}